\newtheorem{fact}{Fact}
\newcommand{\junk}[1]{}
\begin{document}
\title{Breaking the hegemony of the triangle method in clique detection}

\author{
Miros{\l}aw Kowaluk
\inst{1}
\and
Andrzej Lingas
\inst{2}
\institute{
  Institute of Informatics, University of Warsaw, Warsaw, Poland.
  \texttt{kowaluk@mimuw.edu.pl}
\and
  Department of Computer Science, Lund University,
22100 Lund, Sweden.
\texttt{Andrzej.Lingas@cs.lth.se}}
}
\maketitle

\begin{abstract} 
  We consider the fundamental problem of detecting/counting copies of
  a fixed pattern graph in a host graph. The recent progress on this
  problem has not included complete pattern graphs, i.e., cliques (and
  their complements, i.e., edge-free pattern graphs, in the induced
  setting). The fastest algorithms for the aforementioned patterns are
  based on a straightforward reduction to triangle
  detection/counting. We provide an alternative method of
  detection/counting copies of fixed size cliques based on a
  multi-dimensional matrix product. It is at least as time efficient
  as the triangle method in cases of $K_4$ and $K_5.$ The complexity
  of the multi-dimensional matrix product is of interest in its own
  rights. We provide also another alternative method for
  detection/counting $K_r$ copies, again time efficient for $r\in \{
  4,\ 5 \}$.
 \end{abstract}

\section{Introduction}
The problems of
detecting, finding, counting or listing
subgraphs or induced subgraphs 
of a host graph that are
isomorphic to a pattern graph 
are basic in graph algorithms. 
They are generally termed as {\em subgraph isomorphism}
and {\em induced subgraph isomorphism} problems, respectively.
Such well-known NP-hard problems
as the independent set, clique, Hamiltonian cycle or 
Hamiltonian path can be regarded as their special cases.
 
Recent examples of applications of different variants of subgraph
isomorphism include among other things \cite{FKL15,VW15}:
a comparison of bio-molecular networks by their so-called motifs
\cite{ADH08},
an analysis of social networks by counting the number of copies
of a small pattern graph
\cite{SW05},
graph matching constraints in
automatic design of processor systems
\cite{WK09},
and the detection of communication patterns
between intruders in network security \cite{SS04}.
In the
aforementioned applications, the pattern graphs are typically of fixed
size which allows for polynomial-time solutions.

At the beginning of 80s, Itai and Rodeh \cite{itai} presented
the following straightforward reduction of not only
triangle detection but also triangle counting to fast
matrix multiplication.
Let $A$ be the $0-1$ adjacency matrix of the host graph $G$ on $n$ vertices
(see Preliminaries).
Consider the matrix product $C=A\times A.$ Note that
$C[i,j]=\sum_{l=1}^n A[i,l]A[l,j]$ is the number of two-edge paths
connecting the vertices $i$ and $j.$ Hence, if $\{i,j\}$ is an edge
of $G$ then $C[i,j]$ is the number of triangles in $G$ including
the edge $\{i,j\}$. Consequently, the number of triangles
in an $n$-vertex graph can be reported in $O(n^{\omega})$
time, where $\omega$ is the exponent of fast matrix multiplication for 
matrices of size $n\times n$. (Recently, Alman and Vassilevska Williams
have shown that $\omega \le 2.3729$ \cite{AV21}.)

A few years later, Necetril and Poljak \cite{NP85}
showed an efficient reduction of detection and counting
copies of any pattern graph both in the standard
and induced case to the aforementioned method
for triangle detection and counting. The idea
is to divide the pattern graph into three almost
equal parts and to build an auxiliary graph on
copies of subgraphs isomorphic to one of three
parts. Then, the triangle detection/counting method
is run on the auxiliary graph. Two decades later,
Eisenbrand and Grandoni \cite{EG04} (cf. \cite{KKM00}) refined
this general triangle method by using fast
algorithms for rectangular matrix multiplication
instead of those for square matrix multiplication.
For a pattern graph on $r\ge 3$ vertices and a host graph on $n$ vertices,
the (refined) general triangle method runs
in time 
$O(n^{\omega(\lfloor r/3 \rfloor, \lceil (r-1)/3 \rceil,
  \lceil r/3 \rceil )})$ \cite{EG04,KKM00,NP85}, where $\omega (p,q,s)$
denotes the exponent of fast matrix multiplication for rectangular
matrices of size $n^p \times n^q$ and $n^q\times n^s$, respectively
\cite{LeGall12}. For example, it is known that  $\omega(1,2,1)\le 3.257 $ \cite{LeGall12}.

Up to now, the general triangle method remains the fastest known
universal method for the detection and counting standard and induced
copies of fixed pattern graphs.  In the recent two decades, there has
been a real progress in the design of efficient algorithms for
detection and even counting of fixed pattern graphs both in the
standard \cite{FKL15,KLL13} and induced case
\cite{BKS18,DDV19,FKL15,FKLL12,VW15}. Among other things, the progress has
been based on the use of equations between the numbers of copies of
different fixed patterns in the host graph
\cite{DDV19,KKM00,KLL13,VW15} and randomization
\cite{DDV19,FKL15,VW15}.  Unfortunately, this progress has not
included complete pattern graphs, i.e., $K_r$ graphs (and their complements, i.e., edge-free pattern
graphs in the induced setting). For the aforementioned
pattern graphs, the generalized triangle method remains the fastest
known one.

In this paper, we consider another universal method that
in fact can be viewed as another type of generalization
of the classic algorithm for triangle detection
and counting due to Itai and Rodeh. We can rephrase
the description of their algorithm as follows.
At the beginning, we form a list of subgraphs
isomorphic to $K_2$ (i.e., edges) and then
for each subgraph on the list we count the number of vertices
outside it that are adjacent to both vertices
of the subgraph, in other words, we count the number of extensions of
the subgraph to a clique on three vertices. The latter
task can be done efficiently by squaring the adjacency matrix
of the host graph. We can generalize the algorithm
to include detection/counting $K_r$ copies, $r\ge 3,$ by replacing
$K_2$ with $K_{r-1}$ and using a $(r-1)$-dimensional product
of $r-1$ copies of the adjacency matrix (see Section 3 for
the definition) instead of squaring
the matrix. Listing the subgraphs of the host graph
takes $O(n^{r-1})$ time so the overall time required
by the alternative method is $O(n^{r-1}+ n^{\omega_{r-1}}),$
where $\omega_{k}$ is the exponent of fast $k$-dimensional
product of $k$ $n\times n$ matrices.
On the other hand, we show in particular that $\omega_k\le \omega (\lceil
k/2 \rceil ,1,\lfloor k/2 \rfloor )$.
Hence, our alternative method in particular
computes the number of $K_4$ copies in an $n$-vertex
  graph in $O(n^{\omega(2,1,1)})$ time
  and the number of $K_5$ copies in $O(n^{\omega(2,1,2)})$ time.
  Also, if the input graph contains a copy
  of $K_4$ or $K_5$ respectively then
  a copy of $K_4$ can be found
  in the graph in $\tilde {O}(n^{\omega(2,1,1)})$ time
  while that of $K_5$
  in $\tilde{O}(n^{\omega(2,1,2)})$ time by a slightly
  modified alternative method.
  Thus, our upper time bounds for $K_4$ and $K_5$ at least match those
  for $K_4$ and $K_5$ yielded by the generalized triangle method
  \cite{EG04}. If $\omega_k<\omega (\lceil k/2
\rceil, 1, \lfloor k/2 \rfloor )$
for $k$ equal to $3$ or $4$ then we would get a breakthrough in detection/counting
of $K_4$ or $K_5,$ respectively.
For $K_r,$ where $r\ge 6,$ the generalized triangle method asymptotically
subsumes our alternative method and for $K_3$ the methods coincide.

We provide also another alternative
method for detection/counting $K_r$ copies, where $r\ge 3.$
It starts from listing all $K_{r-2}$ copies and then
it tries to extend them by two vertices to $K_r$ copies.
Again, the method is time efficient for $r\le 5$.
Finally, in order to obtain a method for detection/counting
$K_r$ copies that could compete with the generalized triangle method
for $r\ge 6$,
we consider a generalization of our alternative methods.
Similarly, it starts from listing all $K_q$ copies,
where $q<r-1,$ and then it tries to extend them by $r-q$
vertices to form $K_r$ copies. However, to perform
the extension step efficiently, we need to split
the extending vertex sets in two almost equal parts,
so the generalized method can be also regarded as a variant
of the generalized triangle one.

\subsection{Paper organization}

In the next section, the basic matrix and graph notation used in the
paper is presented.  Section 3 is devoted to the $k$-dimensional
matrix product of $k$ matrices, in particular its definition and upper
time bounds on the product in terms of those for fast rectangular
matrix multiplication. In Section 4, the alternative method for
detection/counting copies of fixed cliques in a host graph relying on
the multi-dimensional matrix product is presented and analyzed. Section 5
presents shortly another alternative method for detection/counting $K_r$
copies while Section 6 is devoted to a generalization
of the alternative methods.  We conclude with open problems.

\section{Preliminaries}

For a positive integer $r,$ we shall denote
the set of positive integers not greater than $r$
by $[r].$

For a matrix $D,$ $D^T$ denotes its transpose.  For positive real numbers
$p,\ q,\ s,$ $\omega(p,q,s)$ denotes the exponent of fast matrix
multiplication for rectangular matrices of size $n^p\times n^q$ and $n^q\times
n^s,$ respectively.
For convenience, $\omega =\omega(1,1,1).$

Let $\alpha$ stand
for  $sup \{0
\le q \le 1 : \omega(1,q,1) = 2 + o(1) \}.$
The following recent lower bound on
$\alpha$ is due to Le Gall and Urrutia \cite{LGU}.

\begin{fact}\label{fact: ur}
The inequality $\alpha >  0.31389$ holds \cite{LGU}.
\end{fact}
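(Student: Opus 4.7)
The plan is to follow the standard strategy used in every lower bound on $\alpha$ since Coppersmith's 1997 paper on rectangular matrix multiplication, refined through the asymmetric laser method and its modern variants. Recall that $\alpha$ is the supremum of $q\in[0,1]$ for which $\omega(1,q,1)=2+o(1)$, so a lower bound of the form $\alpha>q_0$ is obtained by exhibiting, for every $q<q_0$, a bilinear algorithm for $\langle n,n^q,n\rangle$ matrix multiplication whose tensor analysis yields exponent $2+o(1)$.

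First I would fix a base tensor with strong asymmetric structure; the natural starting point is a high power of the Coppersmith--Winograd tensor $CW_m$ (or one of the more recent variants built from it), whose support has the ``laser'' property that makes asymmetric hashing efficient. The next step is the asymmetric laser method proper: a tensor power $CW_m^{\otimes N}$ degenerates into a direct sum of many disjoint independent rectangular matrix-multiplication instances of dimensions $n^{a_i}\times n^{qb_i}\times n^{c_i}$, and one chooses a probability distribution on these sub-instances so that a random hashing on the three index sets leaves a large surviving set with all three coordinate triples pairwise disjoint. Standard Salem--Spencer style constructions are then used to extract an honest disjoint union of matrix-multiplication problems.

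The quantitative heart of the argument is a convex program: maximize the effective exponent contributed by the surviving sub-instances subject to the matching constraints between the three hash entropies. Its value, as a function of $q$, upper-bounds $\omega(1,q,1)$, and $\alpha$ is (lower-bounded by) the largest $q$ for which this value is $2$. To push the bound to $0.31389$ one uses the specific tensor and dual solution constructed by Le Gall and Urrutia, together with numerical certification that the convex program has value $2$ at $q=0.31389$.

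The main obstacle is precisely this last step. The conceptual framework (the laser method, asymmetric hashing, degeneration into disjoint matrix multiplications) is by now well established, but the lower bound $0.31389$ depends on (i) the careful choice of the base tensor and its value decomposition, and (ii) a numerical optimization whose certificate is nontrivial to reproduce by hand. Since the improvement from previous bounds to $0.31389$ lies entirely inside this numerical analysis of \cite{LGU}, I would not attempt to reprove it here; instead Fact~\ref{fact: ur} is used as a black box in the subsequent sections, where only the inequality $\alpha>0.31389$ (not its proof) feeds into our complexity estimates.
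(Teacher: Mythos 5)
Your proposal reaches the same endpoint as the paper: Fact~\ref{fact: ur} is stated without proof and imported as a black box from Le Gall and Urrutia \cite{LGU}, which is exactly what you conclude after your (accurate but inessential) sketch of the laser-method machinery behind their bound. This matches the paper's treatment, so nothing further is needed.
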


A \textit{witness} for a non-zero entry $C[i, j]$ of the Boolean matrix product
$C$ of a Boolean $p\times q$ matrix  $A$ and  a Boolean $q\times s$ matrix $B$ is any index $\ell\in [q]$
such that $A[i,\ell]$
and $B[\ell , j]$ are equal to 1.

The \textit{witness problem} is to report
a witness for each non-zero entry of the Boolean matrix product
of the two input matrices.

Alon and Naor provided a solution
to the witness problem for square Boolean matrices \cite{AN96}
which is almost equally fast as that for square matrix multiplication \cite{AV21}.
It can be easily generalized to include the Boolean product
of two rectangular Boolean matrices of sizes $n^p\times n^q$
and $n^q\times n^s,$ respectively.
The asymptotic matrix multiplication time $n^{\omega}$
is replaced by $n^{\omega(p,q,s)}$ in the generalization.

\begin{fact}
\label{fact: wit}
For positive $p,q,s,$
the witness problem for the Boolean matrix
product of an $n^p\times n^q$ Boolean matrix
with an $n^q\times n^s$ Boolean matrix
can be solved (deterministically) in $\tilde{O}(n^{\omega(p,q,s)})$
time.
\end{fact}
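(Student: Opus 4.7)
My plan is to generalize the Alon--Naor witness algorithm \cite{AN96} from the square case to the rectangular case $n^p\times n^q$ by $n^q\times n^s$. The core subroutine is a \emph{unique-witness} procedure: given a subset $T\subseteq [n^q]$, form the weighted $n^p\times |T|$ matrix $\widetilde{A}_T$ with $\widetilde{A}_T[i,\ell]=\ell$ whenever $A[i,\ell]=1$ and $\ell\in T$ (else $0$), and compute the integer product $\widetilde{A}_T\cdot B_T$, where $B_T$ is $B$ restricted to the rows indexed by $T$. For every entry $(i,j)$ whose witness in $T$ happens to be unique, the resulting integer is exactly that witness index. Since all entries are bounded by $n^{O(1)}$, this integer product reduces to $O(\log n)$ Boolean rectangular products of shape $n^p\times n^{q'}\times n^s$ with $n^{q'}=|T|$, at a cost of $\tilde{O}(n^{\omega(p,q',s)})$.

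The second step handles non-unique witnesses by subsampling the witness dimension. For each $k\in\{0,1,\ldots,\lceil q\log_2 n\rceil\}$, we isolate the entries $(i,j)$ whose witness count lies in the range $[2^{k-1},2^k]$: sampling each $\ell\in[n^q]$ independently with probability $2^{-k}$ and repeating $O(\log n)$ times ensures that with high probability every such entry receives a singleton sample on some trial, so that the unique-witness subroutine applied to the sample recovers the required index. Following \cite{AN96}, the randomization is removed by substituting an explicit $(n^q,2^k)$-splitter (equivalently, a small perfect hash family) over the universe $[n^q]$, at only a polylogarithmic overhead in the number of subsamples.

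For the overall running time, each value of $k$ triggers $\tilde{O}(1)$ rectangular products whose thin middle dimension $n^{q'}$ is at most $n^q$. Since $\omega(p,\cdot,s)$ is monotone nondecreasing in its middle argument (any smaller product can be padded with zero rows and columns and solved by the algorithm for the larger one), every such product costs at most $\tilde{O}(n^{\omega(p,q,s)})$. Summing over the $O(q\log n)$ values of $k$ and absorbing the polylogarithmic derandomization overhead keeps the total time at $\tilde{O}(n^{\omega(p,q,s)})$.

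The main obstacle is the careful adaptation of the Alon--Naor derandomization: the splitter families must be instantiated over the universe $[n^q]$ rather than $[n]$, and one must confirm that switching from Boolean to integer entries of magnitude $n^{O(1)}$ incurs only a polylogarithmic overhead even in the rectangular regime. Both points extend routinely from the square case---explicit splitter constructions exist for arbitrary polynomial-size universes, and the bit-level decomposition of integer entries costs the same $O(\log n)$ factor regardless of the matrix shape---so this completes the proof plan.
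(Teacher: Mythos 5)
Your proposal is correct and follows exactly the route the paper intends: the paper offers no proof of this fact beyond the one-line remark that the Alon--Naor witness algorithm ``can be easily generalized'' to rectangular matrices with $n^{\omega}$ replaced by $n^{\omega(p,q,s)}$, and your writeup is a sound, detailed instantiation of precisely that generalization (unique-witness recovery via a weighted integer product, geometric subsampling of the witness universe $[n^q]$, splitter-based derandomization, and monotonicity of $\omega(p,\cdot,s)$ to bound the subsampled products).
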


We shall consider only simple undirected graphs. 

A \textit{subgraph} of the graph $G = (V, E)$ is a graph $H = (V_H,
E_H)$ such that $V_H \subseteq V$ and $E_H \subseteq E$.

An \textit{induced subgraph} of the graph $G = (V, E)$ is a graph $H =
(V_H, E_H)$ such that $V_H \subseteq V$ and $E_H = E \cap (V_H \times
V_H)$. A subgraph of $G$ \textit{induced by} $S\subseteq V$ is a graph $F=(V_F,E_F)$
such that $V_F=S$ and $E_F = E \cap (S \times S)$. It is denoted by $G[S].$

For simplicity, we shall refer to a subgraph of a graph $G$ that is
isomorphic to $K_r$ as a \textit{copy} of $K_r$ in $G$ or just $K_r$ copy
in $G.$

The \textit{adjacency matrix} $A$ of a graph $G = (V, E)$ is the
$0-1$ $n\times n$ matrix such that $n=|V|$ and
for $1\le i,j\le n,$ $A[i, j]=1$ if and only if $\{ i, j\} \in E$.


\section{Multi-dimensional matrix product}

\begin{definition}
  For $k$ $n\times n$ 
  matrices $A_q$, $q=1,...,k,$ (arithmetic or Boolean, respectively) their
  $k$-dimensional (arithmetic or Boolean, respectively)
  matrix product $D$ is defined by
$$D[i_1,i_2,...,i_{k}]=
  \sum_{\ell=1}^n
  A_1[i_1, \ell]A_2[i_2, \ell]...A_{k}[i_{k},\ell],$$

  where $i_j\in [n]$ for $j=1,...,k.$
  The exponent of fast  $k$-dimensional (arithmetic) matrix product
  of $k$  $n\times n$ matrices is denoted by $\omega_k.$

  In the Boolean case,
  a witness for a non-zero entry $D[i_1,i_2,...,i_{k}]$ 
  of the $k$-dimensional Boolean matrix product
  is any index $\ell\in [n]$ such that 
  $A_1[i_1, \ell]A_2[i_2, \ell]...A_{k}[i_{k},\ell]$
  is equal to (Boolean) $1.$
  The witness problem for the
  $k$-dimensional Boolean matrix product is to report
  a witness for each non-zero entry of the product.
\end{definition}

Note that in particular the $2$-dimensional
matrix product of the matrices $A_1$ and $A_2$
coincides with the standard matrix product
of $A_1$ and $(A_2)^T$ which yields
$\omega_2=\omega .$

\begin{lemma} \label{lem: kdim}
  Let $k, k_1, k_2$ be three positive integers such that $k=k_1+k_2.$
  Both in the arithmetic
  and Boolean case, the $k$-dimensional matrix product of $k$  $n\times n$
  matrices can be computed  in $O(n^{\omega (k_1,1,k_2)})$ time, consequently
  $\omega_k\le \omega (k_1,1,k_2).$ Also, in the Boolean case,
  the witness problem for the $k$-dimensional matrix product
  can be solved in $\tilde{O}(n^{\omega (k_1,1,k_2)})$ time.
\end{lemma}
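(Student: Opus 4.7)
The plan is to reduce the $k$-dimensional product to a single rectangular matrix multiplication by packing indices according to the given split $k=k_1+k_2$. First I would assemble an auxiliary $n^{k_1}\times n$ matrix $B$ with rows indexed by tuples $(i_1,\ldots,i_{k_1})\in [n]^{k_1}$ and columns by $\ell\in[n]$, defined by
$$B[(i_1,\ldots,i_{k_1}),\ell]=A_1[i_1,\ell]\,A_2[i_2,\ell]\cdots A_{k_1}[i_{k_1},\ell],$$
and analogously an $n\times n^{k_2}$ matrix $C$ with
$$C[\ell,(i_{k_1+1},\ldots,i_k)]=A_{k_1+1}[i_{k_1+1},\ell]\cdots A_k[i_k,\ell].$$
Unpacking the row and column tuples, a direct check shows $D[i_1,\ldots,i_k]$ is precisely the entry of the ordinary product $B\cdot C$ at row $(i_1,\ldots,i_{k_1})$ and column $(i_{k_1+1},\ldots,i_k)$, in both the arithmetic and the Boolean case.

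Building $B$ and $C$ entry by entry takes $O(n^{k_1+1})$ and $O(n^{k_2+1})$ time, respectively. Once they are in hand, I would invoke fast rectangular matrix multiplication for matrices of the shapes $n^{k_1}\times n$ and $n\times n^{k_2}$, which by definition runs in $O(n^{\omega(k_1,1,k_2)})$ time. Since $\omega(k_1,1,k_2)\ge k_1+k_2$ (the output alone has that many entries) and $k_1,k_2\ge 1$, both preprocessing costs are absorbed, yielding the overall bound $O(n^{\omega(k_1,1,k_2)})$ and hence $\omega_k\le \omega(k_1,1,k_2)$.

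For the witness version, the same reduction works verbatim. An index $\ell\in[n]$ is a witness for $D[i_1,\ldots,i_k]$ iff $A_j[i_j,\ell]=1$ for every $j\in[k]$, which is equivalent to $B[(i_1,\ldots,i_{k_1}),\ell]=1$ and $C[\ell,(i_{k_1+1},\ldots,i_k)]=1$; that is, $\ell$ is exactly a witness for the corresponding entry of the Boolean product $B\cdot C$. Applying Fact~\ref{fact: wit} to $B$ and $C$ then produces witnesses for every nonzero entry of $D$ in $\tilde{O}(n^{\omega(k_1,1,k_2)})$ time.

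There is no real obstacle here beyond the index-packing bookkeeping; the only thing to watch is that the preprocessing step of forming $B$ and $C$ not dominate, which is immediate from $\omega(k_1,1,k_2)\ge k_1+k_2\ge \max(k_1,k_2)+1$. The lemma then follows in all three claimed forms.
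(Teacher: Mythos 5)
Your proposal is correct and follows essentially the same reduction as the paper: pack the first $k_1$ and last $k_2$ indices into the rows of two auxiliary matrices, observe that the $k$-dimensional product becomes an $n^{k_1}\times n$ by $n\times n^{k_2}$ rectangular product, and invoke Fact~\ref{fact: wit} for the witness version. The only (immaterial) difference is that you build the second factor directly as an $n\times n^{k_2}$ matrix rather than as an $n^{k_2}\times n$ matrix to be transposed, and you spell out explicitly why the preprocessing cost is absorbed.
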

\begin{proof}
  To prove the first part, it is sufficient to consider the
  arithmetic case as the Boolean one trivially reduces to it.
  
  Let $A_1,....,A_{k}$ be the input matrices.
  Form an $n^{k_1}\times n$ matrix $A$ whose rows are
  indexed by $k_1$-tuples of indices in $[n]$
  and whose columns are indexed by indices in $[n]$
  such that $A[i_1...i_{k_1}, \ell ]=A_1[i_1,\ell]....A_{k_1}[i_{k_1}, \ell].$
Similarly, form an  $n^{k_2}\times n$ matrix $B$ whose rows are
  indexed by $k_2$-tuples of indices in $[n]$
  and whose columns are indexed by indices in $[n]$
  such that $B[j_1...j_{k_2}, \ell ]=A_{k_1+1}[j_1,\ell]....A_{k}[j_{k_2}, \ell].$
  Compute the rectangular matrix product $C$ of the matrix $A$ with
  the matrix $B^T$.
  By the definitions, the $D[i_1,...,i_{k_1},i_{k_1+1},...,i_k]$ entry
  of the product of the input matrices $A_1,....,A_{k}$ is equal
  to the entry $C[i_1...i_{k_1},i_{k_1+1}...i_k].$
    The matrices $A,\ B$ can be formed in $O(n^{k_1+1}+n^{k_2+1})$ time,
    i.e., $O(n^k)$ time, while the product $C$
    can be computed in $O(n^{\omega(k_1,1,k_2)})$ time.

    To prove the second part of the lemma it is sufficient to
    consider Boolean versions of the matrices $A,\ B,\ C$ and use
    Fact  \ref{fact: wit}.
      \qed
\end{proof}

  By combining Lemma \ref{lem: kdim} with Fact \ref{fact: ur}, we obtain the following corollary.
  
  \begin{corollary}
    For even $k\ge 8,$ $\omega_k=k+o(1).$
  \end{corollary}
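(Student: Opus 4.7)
The plan is to apply Lemma~\ref{lem: kdim} with a balanced split and then reduce the resulting rectangular multiplication exponent to the square case via a change of scale, so that Fact~\ref{fact: ur} can be invoked.

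Since $k$ is even, write $k=2m$ with $m\ge 4$ and set $k_1=k_2=m$ in Lemma~\ref{lem: kdim}. This yields $\omega_k\le \omega(m,1,m)$, i.e.\ the $k$-dimensional product reduces to multiplying an $n^m\times n$ matrix by an $n\times n^m$ matrix. To connect this with the definition of $\alpha$, I would rescale by letting $N=n^m$: the same product is of an $N\times N^{1/m}$ matrix by an $N^{1/m}\times N$ matrix, which can be carried out in $O(N^{\omega(1,1/m,1)})$ time. Converting back, $\omega(m,1,m)\le m\cdot \omega(1,1/m,1)$.

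Next, since $k\ge 8$ we have $m\ge 4$, hence $1/m\le 1/4 < 0.31389 < \alpha$ by Fact~\ref{fact: ur}. The definition of $\alpha$ therefore gives $\omega(1,1/m,1)=2+o(1)$, and so
\[
\omega_k\le \omega(m,1,m)\le m(2+o(1))=2m+o(1)=k+o(1).
\]
The matching lower bound $\omega_k\ge k$ is immediate: the output tensor has $n^k$ entries, each of which must be written at least once.

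There is no real obstacle here; the only subtle point is choosing the split that makes the middle dimension as small as possible relative to the outer ones, and checking that the threshold $k\ge 8$ in the statement is exactly what is needed to push $1/m$ below the current lower bound on $\alpha$ from Fact~\ref{fact: ur}. For odd $k$ the best balanced split would give $1/\lceil k/2\rceil$ in the middle dimension, which for $k=7$ equals $1/4$ and would already be below $\alpha$; but for even $k$ the threshold $k=8$ is tight against the bound $\alpha>0.31389$, which is presumably why the statement is restricted to even $k\ge 8$.
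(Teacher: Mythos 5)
Your proposal is correct and follows essentially the same route as the paper: apply Lemma~\ref{lem: kdim} with the balanced split $k_1=k_2=k/2$, rescale $N=n^{k/2}$ to identify $\omega(k/2,1,k/2)$ with $(k/2)\,\omega(1,2/k,1)$, and note that $2/k\le 1/4<\alpha$ for $k\ge 8$ so that Fact~\ref{fact: ur} gives $2+o(1)$. Your explicit mention of the trivial lower bound $\omega_k\ge k$ (from the $n^k$ output entries) is a small but welcome addition that the paper leaves implicit.
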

  \begin{proof}
    We obtain the following chain of qualities
    on the asymptotic time required by the $k$-dimensional
    matrix product using Lemma \ref{lem: kdim} and Fact \ref{fact: ur}:
    $$n^{\omega(k/2,1,k/2)}=(n^{k/2})^{\omega(1,2/k,1)}=(n^{k/2})^{2+o(1)}=n^{k+o(1)}.$$
    \qed
  \end{proof}
  
\section{Clique detection}

The following algorithm is a straightforward
generalization of that due to Itai and Rodeh for
triangle counting \cite{itai} to include $K_r$ counting, for $r\ge 3$.
\newpage
\noindent
{\bf Algorithm 1}
\begin{enumerate}
\item form  a list $L$ of all $K_{r-1}$ copies in $G$
\item $t\leftarrow 0$
\item {\bf for} each  $C\in L$ {\bf do}
  \newline
  increase $t$ by the number of vertices in $G$ that are adjacent
  to all vertices of $C$
\item return $t/r$
\end{enumerate}

The correctness of Algorithm 1 follows from the fact that the number
of $K_r$ copies including a given copy $C$ of $K_{r-1}$ in the host
graph is equal to the number of vertices outside $C$ in the graph that
are adjacent to all vertices in $C$ and that a copy of $K_r$ includes
exactly $r$ distinct copies of $K_{r-1}$ in the graph.

The first step of Algorithm 1 can be implemented in $O(n^{r-1})$ time.
We can use the $(r-1)$-dimensional matrix product
to implement the third step
by using the next lemma immediately following from the definition
of the product.

\begin{lemma}\label{lem: kdap}
  Let $D$ be the $k$-dimensional matrix product of $k$ copies of
  the adjacency matrix of the input graph $G$ on $n$ vertices. Then,
  for any $k$ tuple $i_1,\ i_2,\ ...,\ i_{k}$ of vertices of $G$,
  the number of vertices in $G$ adjacent to each vertex in the $k$ tuple
  is equal to $D[i_1,i_2,...,i_{k}].$
  \end{lemma}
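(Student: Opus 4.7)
The plan is to unfold the definition of the $k$-dimensional matrix product in the special case where all $k$ input matrices coincide with the adjacency matrix $A$ of $G$. Substituting $A_1 = A_2 = \cdots = A_k = A$ into the defining formula from Section 3 immediately gives
$$D[i_1, i_2, \ldots, i_k] = \sum_{\ell=1}^n A[i_1, \ell]\, A[i_2, \ell] \cdots A[i_k, \ell].$$
Thus the entire task reduces to interpreting this sum combinatorially, which is where I would spend the bulk of the (very short) argument.

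Next, I would examine a single summand. By the definition of the adjacency matrix, each factor $A[i_j, \ell]$ lies in $\{0,1\}$ and equals $1$ precisely when $\{i_j, \ell\}$ is an edge of $G$. Consequently, the product $A[i_1, \ell] \cdots A[i_k, \ell]$ evaluates to $1$ exactly when $\ell$ is adjacent in $G$ to every one of the vertices $i_1, \ldots, i_k$, and to $0$ otherwise. Summing over $\ell \in [n]$ therefore counts precisely the vertices of $G$ that are simultaneously adjacent to all vertices of the $k$-tuple, which is exactly the quantity claimed by the lemma.

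There is no real obstacle, since the proof is a direct unpacking of two definitions. The only minor point worth remarking on is that if $\ell$ happens to equal some $i_j$, then $A[i_j, \ell] = A[i_j, i_j] = 0$ (since $G$ is simple), so such an $\ell$ contributes $0$ to the sum; this is consistent with the convention that a vertex is not adjacent to itself, so no spurious counting arises and no case analysis is needed.
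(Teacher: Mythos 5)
Your proof is correct and matches the paper's intent exactly: the paper gives no explicit proof, stating only that the lemma follows immediately from the definition of the product, and your unpacking of the sum as an indicator count (including the remark that diagonal entries vanish for a simple graph) is precisely that immediate argument spelled out.
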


By the discussion and Lemma \ref{lem: kdap},
we obtain the following theorem.

\begin{theorem}\label{theo: first}
The number of $K_r$ copies in the input graph on $n$ vertices
can be computed (by Algorithm 1) in $O(n^{r-1}+n^{\omega_{r-1}})$ time.
\end{theorem}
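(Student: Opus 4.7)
The plan is to bound the cost of Algorithm~1 step by step, invoking Lemma~\ref{lem: kdap} for the central part and relying on the correctness argument already supplied in the paragraph preceding the theorem statement (each $K_r$ copy contains exactly $r$ distinct $K_{r-1}$ sub-copies, so the sum accumulated in $t$ is precisely $r$ times the target count).

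For Step~1, I would enumerate all $\binom{n}{r-1}$ subsets of $V(G)$ of size $r-1$, and for each subset check the $\binom{r-1}{2}$ candidate edges directly in the adjacency matrix. Since $r$ is a fixed constant, each such check is $O(1)$, so the list $L$ of $K_{r-1}$ copies is produced in $O(n^{r-1})$ time and has $|L| = O(n^{r-1})$ entries.

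For Step~3, I would precompute the $(r-1)$-dimensional matrix product $D$ of $r-1$ copies of the adjacency matrix $A$ of $G$; by the definition of $\omega_{r-1}$ this takes $O(n^{\omega_{r-1}})$ time. By Lemma~\ref{lem: kdap}, for every $(r-1)$-tuple $(i_1,\dots,i_{r-1})$ of vertices the entry $D[i_1,\dots,i_{r-1}]$ equals the number of vertices $v \in V(G)$ adjacent in $G$ to each of $i_1,\dots,i_{r-1}$. Because $A$ has zero diagonal, none of the vertices $i_1,\dots,i_{r-1}$ themselves are counted, so this value coincides exactly with the quantity required in the body of the for-loop, namely the number of vertices outside $C=\{i_1,\dots,i_{r-1}\}$ adjacent to every vertex of $C$. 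Hence every iteration of the loop is a single $O(1)$ table lookup plus one addition, and the whole loop costs $O(|L|) = O(n^{r-1})$. Step~4 is a single division. Summing the contributions yields the claimed bound $O(n^{r-1} + n^{\omega_{r-1}})$.

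The only mild subtlety, and the one point where I would argue explicitly, is the identification of $D[i_1,\dots,i_{r-1}]$ with the count desired in Step~3; this hinges on the zero diagonal of $A$, which prevents the vertices of $C$ from being counted as adjacent to themselves. Everything else is routine accounting, so I do not expect any genuine obstacle in the proof.
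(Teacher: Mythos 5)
Your proposal is correct and follows essentially the same route as the paper: list the $K_{r-1}$ copies in $O(n^{r-1})$ time, compute the $(r-1)$-dimensional product of $r-1$ copies of the adjacency matrix in $O(n^{\omega_{r-1}})$ time, and read off the extension counts via Lemma~\ref{lem: kdap}, using the fact that each $K_r$ copy contains exactly $r$ copies of $K_{r-1}$. Your explicit remark about the zero diagonal of $A$ (which ensures the vertices of $C$ are not counted as their own extensions) is a detail the paper leaves implicit, but it does not alter the argument.
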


By Lemma \ref{lem: kdim}, we obtain the following corollary
from Theorem \ref{theo: first},
matching the upper time bounds on the detection/counting
copies of $K_4$ and $K_5$
established in \cite{EG04}.

\begin{corollary}
  The number of $K_4$ copies in an $n$-vertex
  graph can be computed (by Algorithm 1) in $O(n^{\omega(2,1,1)})$ time
  while the number of $K_5$ copies in $O(n^{\omega(2,1,2)})$ time.
  Also, if the input graph contains a copy
  of $K_4$ or $K_5$ respectively then
  a copy of $K_4$ can be found
  in the graph in $\tilde {O}(n^{\omega(2,1,1)})$ time
  while that of $K_5$
  in $\tilde{O}(n^{\omega(2,1,2)})$ time (by a modification of Algorithm 1).
\end{corollary}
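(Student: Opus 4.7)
The plan is to obtain both claims as direct consequences of Theorem~\ref{theo: first} and of a witness-finding modification of Algorithm~1, using Lemma~\ref{lem: kdim} to convert the abstract exponent $\omega_{r-1}$ into the explicit rectangular matrix multiplication exponent $\omega(k_1,1,k_2)$.

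For the counting bounds I would start from $O(n^{r-1}+n^{\omega_{r-1}})$ supplied by Theorem~\ref{theo: first}. For $r=4$ I would invoke Lemma~\ref{lem: kdim} with $k_1=2$, $k_2=1$, which gives $\omega_3\le\omega(2,1,1)$; the listing cost $n^{r-1}=n^3$ is then absorbed, because an $n^2\times n$ by $n\times n$ matrix product already has $n^3$ output entries and hence $\omega(2,1,1)\ge 3$. For $r=5$ I would apply Lemma~\ref{lem: kdim} with $k_1=k_2=2$, obtaining $\omega_4\le\omega(2,1,2)$, and again the $n^4$ enumeration term is subsumed by the analogous lower bound $\omega(2,1,2)\ge 4$ coming from the $n^4$ output entries.

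For the finding bounds I would modify step~3 of Algorithm~1 so that, instead of arithmetic counts, it computes the Boolean $(r-1)$-dimensional product $D$ of the adjacency matrix with itself together with a witness for every non-zero entry. By the second half of Lemma~\ref{lem: kdim} (which relies on Fact~\ref{fact: wit}), this costs $\tilde O(n^{\omega(k_1,1,k_2)})$, i.e. $\tilde O(n^{\omega(2,1,1)})$ for $r=4$ and $\tilde O(n^{\omega(2,1,2)})$ for $r=5$. The modified procedure then scans the list $L$ produced in step~1 until it encounters a $K_{r-1}$ copy $C=\{v_1,\dots,v_{r-1}\}$ with $D[v_1,\dots,v_{r-1}]=1$; by Lemma~\ref{lem: kdap} the stored witness $\ell$ is a vertex adjacent to every vertex of $C$, and since $G$ is simple the adjacency matrix has zero diagonal, forcing $\ell\notin C$, so $C\cup\{\ell\}$ is the desired $K_r$ copy. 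The listing step is dominated by the witness computation, giving the stated running times.

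I do not expect any substantial obstacle: the proof is essentially a bookkeeping exercise that unpacks Theorem~\ref{theo: first}, Lemma~\ref{lem: kdim} and Fact~\ref{fact: wit}. The only point that deserves an explicit remark is the trivial output-size lower bound $\omega(k_1,1,k_2)\ge k_1+k_2$, which is what lets the $n^{r-1}$ enumeration term disappear into the matrix product term.
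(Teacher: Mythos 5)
Your proposal is correct and follows essentially the same route as the paper: the corollary is obtained by specializing Lemma~\ref{lem: kdim} (with $k_1=2,k_2=1$ and $k_1=k_2=2$) in Theorem~\ref{theo: first}, and the finding claim uses the witness version of the multi-dimensional product via Fact~\ref{fact: wit}. Your added remarks on the output-size bound $\omega(k_1,1,k_2)\ge k_1+k_2$ absorbing the listing term, and on the zero diagonal forcing the witness outside $C$, are correct details the paper leaves implicit.
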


\section{Another alternative method for $K_r$ detection/counting}

The basic idea of our alternative method for
detection/counting copies of $K_r$
presented in the previous section
is to list copies of $K_{r-1}$ and then extend
them by single vertices to form copies of $K_r$
if possible. In this section, we present
a similar method based on the idea of extending
$K_{r-2}$ copies by pairs of vertices if possible.

This simple method for
detection/counting copies of $K_r$, where $r\ge 3,$ in a host graph $G=(V,E)$
on $n$ vertices
is as follows.  First,
we form a list $L$ of all $K_{r-2}$ copies in $O(n^{r-2})$ time.
Then, for each $H$ in $L$,
we compute the set $S(H)$ of vertices
which extend $H$ to a copy of $K_{r-1}$ in $G.$
It takes totally $O(n\times n^{r-2})$ time.
Next, we form a $0-1$ $n \times O(n^{r-2})$ matrix $B$
whose rows correspond to $v\in V$ and
whose columns correspond to $H\in L$ such
that $B[v,H]=1$ if and only if $v\in S(H).$
Then, we compute the matrix product $C$ of
$B$ with its transpose $B^T$ in $O(n^{\omega(1,r-2,1)})$ time.
Note that $C[v,u]$ is equal to the number of
copies of $K_{r-2}$ in $G$ that can be extended to a
copy of $K_{r-1}$ in $G$ both by $v$
and $u.$  Now, it is sufficient to check for each
non-zero entry $C[v,u]$ if in the adjacency matrix
$A$ of $G$, for the corresponding
entry $A[v,u]=1$ holds. Simply, then the pair of vertices
$v,\ u$ extending the same $C[v,u]$ copies of $K_{r-2}$
in $G$ to pairs of  $K_{r-1}$ copies in $G$ is adjacent
so $C[v,u]$ copies  of $K_r$ occur in $G.$
More concisely, we can
describe this method as follows
under the assumptions that $r\ge 3, $ $G=(V,E)$
is the input graph and $A$ is its adjacency matrix.
\par
\vskip 3pt
\noindent
{\bf Algorithm 2}
\begin{enumerate}
\item $L\leftarrow $ a list of all $K_{r-2}$ copies in $G$
\item {\bf for} $H\in L$ {\bf do}
  \newline
 $S(H)\leftarrow $ the set of vertices extending $H$ to a copy of $K_{r-1}$ in $G$
\item initialize a $0-1$ $|V| \times |L|$ matrix $B$
\item {\bf for} $v\in V \land H\in L$ {\bf do}
  \newline
      {\bf if} $v\in S(H)$ {\bf then} $B[v,H]\leftarrow 1$
        {\bf else} $B[v,H]\leftarrow 0$
\item $C\leftarrow B\times B^T$
\item $t\leftarrow 0$
\item {\bf for} $\{v,\ u\} \subset V$ {\bf do}
\newline
 {\bf if} $A[v,u]=1$  {\bf then}
$t\leftarrow t+C[v,u]$
 \item return $t/\binom r 2$
\end{enumerate}

As each edge $\{v,u\}\in E$ occurs in $C[v,u]$ copies
of $K_r$ in $G$ it contributes $C[v,u]$
to $t.$ On the other hand, $K_r$ has $\binom r 2$ edges.
Hence, the final value of $t$ divided by $\binom r 2$ yields
the number of $K_r$ copies in $G.$
By the discussion
and $\omega(1,r-2,1)\ge r-1$,
we obtain
the following theorem.

\begin{theorem}
  Algorithm 2 computes the number of $K_r$ copies  in an $n$-vertex
  graph in $O(n^{\omega(1,r-2,1)})$ time. 
\end{theorem}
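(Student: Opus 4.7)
The plan is to establish correctness first and then bound the running time. For correctness, the central claim to verify is that for every edge $\{v,u\}\in E$ the value $C[v,u]$ equals the number of $K_r$ copies of $G$ containing this edge. By the definition of $B$, we have $C[v,u]=\sum_{H\in L} B[v,H]\,B[u,H]$, which counts the number of $K_{r-2}$ copies $H$ such that both $v$ and $u$ lie outside $V(H)$ and are adjacent to every vertex of $H$. When $\{v,u\}$ is itself an edge, each such $H$ together with $\{v,u\}$ yields a distinct $K_r$ copy in $G$: the $r$ vertices $V(H)\cup\{v,u\}$ form a clique because $H$ is already a clique, $v$ and $u$ are adjacent to every vertex of $H$, and $vu\in E$. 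Conversely, any $K_r$ copy containing $\{v,u\}$ gives rise to a unique such pair $(H,\{v,u\})$ by deleting $v$ and $u$; the remaining $r-2$ vertices induce a $K_{r-2}$ copy $H$ with $v,u\in S(H)$. Hence the sum accumulated in steps 6--7 counts each $K_r$ copy once per edge, i.e.\ exactly $\binom{r}{2}$ times, and dividing by $\binom{r}{2}$ in step 8 yields the correct count.

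For the running time, I would go step by step. Step 1 enumerates all $(r-2)$-subsets of $V$ and tests the $O(1)$ adjacencies inside each, giving $O(n^{r-2})$ time and $|L|=O(n^{r-2})$. Steps 2--4 perform one pass of length $O(n)$ per $H\in L$, so they cost $O(n\cdot|L|)=O(n^{r-1})$. Step 5, the dominant step, is a single multiplication of an $n\times O(n^{r-2})$ matrix by its transpose; by the definition of the rectangular matrix multiplication exponent this takes $O(n^{\omega(1,r-2,1)})$ time. Steps 6--7 scan all pairs $\{v,u\}$ and cost $O(n^2)$.

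To collapse the additive terms into the stated bound, I would invoke the trivial inequality $\omega(1,r-2,1)\ge r-1$ (the two input matrices already have $\Theta(n^{r-1})$ entries, so any algorithm must at least read them). This absorbs every other additive contribution, giving an overall running time of $O(n^{\omega(1,r-2,1)})$.

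I do not anticipate a serious technical obstacle; the only mildly delicate point is the convention that $S(H)$ excludes the vertices of $H$ itself, which is what guarantees that $v$ and $u$ cannot accidentally coincide with vertices of $H$ and thus that the bijection between triples $(H,v,u)$ contributing to $t$ and edge-labelled $K_r$ copies is genuinely clean. Once this is explicitly noted in the argument for $C[v,u]$, the remainder of the proof is a routine time-accounting exercise using Lemma~\ref{lem: kdim}-style rectangular matrix multiplication bounds.
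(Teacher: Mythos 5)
Your proposal is correct and follows essentially the same route as the paper: identifying $C[v,u]$ with the number of $K_{r-2}$ copies extendable by both $v$ and $u$, observing that for an edge $\{v,u\}$ this equals the number of $K_r$ copies containing that edge so that $t$ overcounts each copy exactly $\binom{r}{2}$ times, and absorbing the $O(n^{r-1})$ listing and extension costs into the matrix-multiplication term via $\omega(1,r-2,1)\ge r-1$. Your explicit remark that $S(H)$ excludes the vertices of $H$ is a helpful clarification that the paper leaves implicit.
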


\begin{corollary}
Algorithm 2 computes the number of $K_4$ copies in
 $O(n^{\omega(1,2,1)})$ time while the number of $K_5$ copies
  in $O(n^{\omega(1,3,1)})$ time.
\end{corollary}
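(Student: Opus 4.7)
The corollary is an immediate specialization of the preceding theorem, so the plan is simply to substitute $r=4$ and $r=5$ into the bound $O(n^{\omega(1,r-2,1)})$ established there. Setting $r=4$ yields $O(n^{\omega(1,2,1)})$ for counting copies of $K_4$, and setting $r=5$ yields $O(n^{\omega(1,3,1)})$ for counting copies of $K_5$. No further algorithmic idea is required; Algorithm 2 is run verbatim with the stated parameter.

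Before writing this out, I would double-check that the matrix-multiplication step (Step 5) really dominates the whole running time for these specific values of $r$, since the theorem's proof uses the inequality $\omega(1,r-2,1)\ge r-1$ to absorb the other steps. For $r=4$ the auxiliary costs are $O(n^{r-2})=O(n^2)$ for listing $K_2$ copies, $O(n^{r-1})=O(n^3)$ for computing each $S(H)$ and populating $B$, and $O(n^2)$ for the final scan over pairs; all of these are dominated by $n^{\omega(1,2,1)}$ since $\omega(1,2,1)\ge 3$. The analogous check for $r=5$ uses $\omega(1,3,1)\ge 4$, and the non-matrix costs sum to $O(n^4)$.

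Since both instantiations are pure substitution and the dominance verification is routine and has already been folded into the theorem, there is no genuine obstacle; the corollary is essentially a restatement of the theorem at the two values of $r$ of primary interest.
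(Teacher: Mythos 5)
Your proposal is correct and matches the paper, which treats this corollary as an immediate specialization of the preceding theorem obtained by substituting $r=4$ and $r=5$ into the bound $O(n^{\omega(1,r-2,1)})$. Your extra check that the non-matrix-product steps are absorbed via $\omega(1,2,1)\ge 3$ and $\omega(1,3,1)\ge 4$ is exactly the inequality $\omega(1,r-2,1)\ge r-1$ already invoked in the theorem's discussion, so nothing further is needed.
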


Again, we can use Fact \ref{fact: wit} to modify Algorithm 2
to find a copy of $K_r$ in $\tilde{O}(n^{\omega(1,r-2,1)})$ time
in the graph in case it contains copies of $K_r.$

  \section{A generalization of the alternative methods}
  Our two alternative methods for detection/counting $K_r$ copies at
  least match the generalized triangle method for $r\le 5$ but they are
  asymptotically subsumed by the latter method for larger $r.$ In this
  section, we present a generalization of our two alternative methods
  that for appropriate parameters is competitive even for $r$ larger
  than $5.$ The basic idea of the generalization is to start from
  listing copies of $K_q$ in the host graph, where $r-q\ge 2$, and
  then to detect extensions of the $K_q$ copies by $r-q\ge 2$ vertices
  to $K_r$ copies in the graph. To perform the latter task efficiently,
  we split such an extension into two almost equal parts, so this generalized method
  can be also regarded as a variant of the triangle one.

The generalized method for
detecting  copies of $K_r$ in a host graph $G=(V,E)$
on $n$ vertices presented in this section
is as follows.  First,
we form a list $L$ of all $K_{q}$ copies in $O(n^{q})$ time.
Then, for each $H$ in $L$,
we compute the set $S(H)$ of vertices
which extend $H$ to a copy of $K_{q+1}$ in $G.$
It takes totally $O(n\times n^{q})$ time.
Now, to find extensions of the $K_q$ copies by $r-q$
vertices to form $K_r$ copies, we set $r_1$ to $\lceil \frac {r-q} 2 \rceil$
and $r_2$ to $\lfloor \frac {r-q} 2 \rfloor$.
Next, for each $H\in L$ and  $i\in [2],$  we form a list $L_i(H)$
of all $K_{r_i}$ copies in $G[S(H)]$.
It takes totally $O(n^{q+r_1})$ time.
Then, for $i\in [2],$ we create a $0-1$ matrix $B_i$
whose rows correspond to sets $s_i$ of $r_i$ vertices
and whose columns correspond to $H\in L$ such that
$B_i[s_i,H]=1$ if and only if there is a copy
of $K_{r_i}$, whose vertex set is $s_i$, in $L_i(H).$
Again, this  takes totally $O(n^{q+r_1})$ time.
Now, it is sufficient to compute the matrix product $C$ of
$B_1$ with $B_2^T$ and check
if there is a non-zero entry $C[s_1,s_2],$ where
$s_1\cup s_2$ induces a copy of $K_{r-q}$ in the
graph. Simply, then all vertices in the induced
$(r-q)$-clique have to be adjacent to the same
$H \in L,$ so they jointly with the vertices of $H$
induce a copy of $K_r$ in $G.$
The computation of the matrix product
$C$ takes $O(n^{\omega(r_1,q,r_2)})$ time,
and the checking of the matrix product $O(n^{r_1+r_2})$
time, i.e., $O(n^{r-q})$ time.

More concisely, we can
describe this method as follows
under the assumptions that $r\ge 3,$  $q\in [r-2],$ and $G=(V,E)$
is the input graph.
\par
\vskip 3pt
\noindent
{\bf Algorithm 3}
\begin{enumerate}
\item $L\leftarrow $ a list of all $K_{q}$ copies in $G$
\item {\bf for} $H\in L$ {\bf do}
  \newline
 $S(H) \leftarrow$ the set of vertices extending $H$ to a copy of $K_{q+1}$ in $G$
\item $r_1 \leftarrow \lceil \frac {r-q} 2 \rceil$
\item $r_2 \leftarrow \lfloor \frac {r-q} 2 \rfloor$
\item {\bf for} $H\in L \land i\in [2]$ {\bf do}
 \newline
$L_i(H)\leftarrow $ a list of all $K_{r_i}$ copies in $G[S(H)]$
\item {\bf for} $i\in [2]$ {\bf do}
\newline
form a $0-1$ matrix $B_i$ 
whose rows correspond to sets $s_i$ of $r_i$ vertices
and whose columns correspond to $H\in L$
such that $B_i[s_i,H]=1$ iff $G[s_i]$ is a copy of $K_{r_i}$ in $L_i(H)$
\item $C\leftarrow B_1\times B_2^T$
\item {\bf for} each $r_1$-vertex subset $s_1$ and each $r_2$-vertex subset $s_2$ {\bf do}
\newline
    {\bf if} $C[s_1,s_2]=1$ and $G[s_1\cup s_2]$ is a
    $(q-r)$-clique {\bf then} return YES and stop
 \item return NO
\end{enumerate}

By the discussion and $\omega(\lceil \frac {r-q} 2 \rceil,q,\lfloor
\frac {r-q} 2 \rfloor )\ge \lceil \frac {r-q} 2 \rceil +q $, we obtain
the following theorem.

\begin{theorem}
  Let $q\in [r-2].$
  Algorithm 3 detects a copy of $K_r$ in an $n$-vertex
  graph in $O(n^{\omega(\lceil \frac {r-q} 2 \rceil,q,\lfloor \frac {r-q} 2 \rfloor )})$ time. 
\end{theorem}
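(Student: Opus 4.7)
The plan is to verify correctness in both directions and then tally the cost of each step in Algorithm 3, showing that step 7 dominates.

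For completeness, suppose $G$ contains a copy of $K_r$ with vertex set $V'$. I would fix an arbitrary $q$-subset $V_H \subseteq V'$ and an arbitrary partition of $V' \setminus V_H$ into sets $V_1, V_2$ of sizes $r_1, r_2$. Then $H := G[V_H]$ is a $K_q$ and appears in $L$; every vertex of $V_1 \cup V_2$ is adjacent to every vertex of $V_H$, so $V_1, V_2 \subseteq S(H)$; and $G[V_1]$, $G[V_2]$ are cliques of the prescribed sizes. These observations force $B_1[V_1, H] = B_2[V_2, H] = 1$, giving $C[V_1, V_2] \geq 1$. Since $G[V_1 \cup V_2]$ is a $K_{r-q}$, step 8 detects the pair $(V_1, V_2)$ and the algorithm returns YES.

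For soundness, whenever the algorithm returns YES on a pair $(s_1, s_2)$, the inequality $C[s_1, s_2] \geq 1$ produces some $H \in L$ with $B_1[s_1, H] = B_2[s_2, H] = 1$, hence $s_1, s_2 \subseteq S(H)$ and $G[s_1]$, $G[s_2]$ are cliques of sizes $r_1$, $r_2$. The additional clique condition on $G[s_1 \cup s_2]$ forces $|s_1 \cup s_2| = r_1 + r_2$, so $s_1 \cap s_2 = \emptyset$; combined with $V(H) \cap S(H) = \emptyset$, the set $V(H) \cup s_1 \cup s_2$ contains exactly $r$ vertices. Every pair is adjacent: edges inside $V(H)$ come from $H$ being a $K_q$, edges inside $s_1 \cup s_2$ from the explicit $(r-q)$-clique check, and edges across from membership in $S(H)$. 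So $G$ contains a $K_r$.

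For the running time I would account for each step: step 1 costs $O(n^q)$; step 2 costs $O(n \cdot n^q) = O(n^{q+1})$; step 5 costs $O(n^q \cdot n^{r_1}) = O(n^{q+r_1})$ because $r_1 \geq r_2$; step 6 costs $O(n^{q+r_1})$ for allocating and filling the two $0$--$1$ matrices; step 7 costs $O(n^{\omega(r_1, q, r_2)})$ by definition of rectangular matrix multiplication; and step 8 iterates over $O(n^{r_1+r_2}) = O(n^{r-q})$ pairs, each handled in $O(1)$ time since $r$ is fixed.

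The only delicate point — and the closest thing to an obstacle — is the dominance argument. It relies on the standard input/output lower bounds $\omega(r_1, q, r_2) \geq r_1 + q$ (which absorbs steps 1--6) and $\omega(r_1, q, r_2) \geq r_1 + r_2 = r - q$ (which absorbs step 8), both immediate from the fact that merely reading an input matrix or writing the output requires that many operations. Together these inequalities swallow every term into $n^{\omega(r_1, q, r_2)}$, yielding the claimed bound.
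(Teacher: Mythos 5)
Your proof is correct and follows essentially the same route as the paper: the same step-by-step cost accounting with step 7 dominating, and the same correctness argument that a surviving pair $(s_1,s_2)$ together with the common witness $H$ induces a $K_r$. In fact you are slightly more careful than the paper, which justifies dominance only via $\omega(r_1,q,r_2)\ge r_1+q$; your additional observation that $\omega(r_1,q,r_2)\ge r_1+r_2=r-q$ (output size) is genuinely needed to absorb step 8 when $q<r_2$.
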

\junk{
\begin{corollary}
Algorithm 2 computes the number of $K_4$ copies in
  in $O(n^{\omega(1,2,1)})$ time while the number of $K_5$ copies
  in $O(n^{\omega(1,3,1)})$ time.
\end{corollary}}

Algorithm 3 can be refined to return the number of $K_r$ copies in
the input graph.  We can also use Fact \ref{fact: wit} to modify
Algorithm 3 to find a copy of $K_r$
in $\tilde{O}(n^{\omega(\lceil \frac {r-q} 2 \rceil,q,\lfloor \frac {r-q} 2 \rfloor )})$ in the graph in case
it contains copies of $K_r.$  

\section{Open problems}
It is an intriguing open problem if the upper bounds in terms of
rectangular matrix multiplication on the $k$-dimensional matrix product
of $k$ square matrices
yielded by Lemma \ref{lem: kdim} are asymptotically tight.
In other words, the question is if
$\omega_k=\min_{k'=1}^{k-1}\omega(k',1,k-k')$ holds
or more specifically 
if $\omega_k=\omega (\lceil k/2
\rceil, 1, \lfloor k/2 \rfloor )$? If this was not the case
for $k$ equal to $3$ or $4$ then we would get a  breakthrough in detection/counting
of $K_4$ or $K_5,$ respectively.

An argument for the inequality $\omega_k< \omega (k_1,1,k_2)$, for positive integers
$k_1,\ k_2$ satisfying $k=k_1+k_2,$ is that in the context of the efficient reduction in the proof of
Lemma \ref{lem: kdim}, the rectangular matrix product seems more general than
the $k$-dimensional one. A reverse efficient reduction seems to be possible only under
very special assumptions. However, proving such an inequality would be extremely
hard as it would imply $\omega (k_1,1,k_2)>k$ and in consequence $\omega>2$
by the straightforward reduction of the rectangular matrix product to the square one.
On the other hand, this does not exclude the possibility of establishing better upper bounds
on $\omega_k$ than those known on  $\omega (k_1,1,k_2)$.

Our alternative methods for detection/counting $K_r$ copies are competitive
and promising for $r\le 5.$ It is also an interesting question if
there is a truly alternative method for detection/counting $K_r$ copies
that could at least match the generalized triangle method for $r>5$?

\end{document}